\documentclass[journal]{IEEEtran}

\usepackage{amsmath}
\usepackage{amssymb}
\usepackage{amsthm}
\usepackage{bm}
\usepackage{algorithm}
\usepackage{algpseudocode}
\usepackage{cite}
\usepackage{enumerate}
\usepackage{ifthen}
\usepackage[T1]{fontenc}
\usepackage{url}
\usepackage{txfonts}

\DeclareMathOperator{\diag}{diag}
\DeclareMathOperator{\toep}{toep}
\DeclareMathOperator{\vdm}{vdm}

\DeclareMathOperator{\lead}{lead}
\DeclareMathOperator{\rep}{rep}
\DeclareMathOperator{\lat}{\Lambda}

\newcommand{\F}{\mathbb{F}}
\newcommand{\N}{\mathbb{N}}

\newcommand{\Z}{\mathbb{Z}}

\newcommand{\Input}{\item[\textsc{Input:}]}
\newcommand{\Output}{\item[\textsc{Output:}]}
\newcommand{\ZComment}[1]{\(\triangleright\) #1}

\newcommand{\To}{\; \textbf{to} \;}
\newcommand{\SuchThat}{\; \textbf{such that} \;}

\newcommand{\tp}{\textrm{\tiny T}}

\newtheorem{theorem}{Theorem}
\newtheorem{definition}{Definition}

\newboolean{anonymous}
\setboolean{anonymous}{false}

\sloppy

\begin{document}

\title{\bf Decoding square-free Goppa codes over $\F_p$}

\ifthenelse{\boolean{anonymous}}
{ 
\author{
    Anonymized for submission
}
} 
{ 
\author{
    Paulo~S.~L.~M.~Barreto,
    Rafael~Misoczki,
    and
    Richard~Lindner
\thanks{
    P. Barreto is supported by the Brazilian National Council for Scientific and Technological Development (CNPq) under research productivity grant 303163/2009-7.
}
\thanks{
    P. Barreto is with the Department of Computer and Digital Systems Engineering,
    Escola Polit\'{e}cnica, Universidade de S\~{a}o Paulo, Brazil.
    (e-mail: \texttt{pbarreto@larc.usp.br})
}
\thanks{
    R. Misoczki is with SECRET team, INRIA Paris-Rocquencourt, France.
    (e-mail: \texttt{rafael.misoczki@inria.fr})
}
\thanks{
    R. Lindner is with the Department~of~Computer~\mbox{Science},
	  Technische Universit\"{a}t Darmstadt, Darmstadt, Germany.
	  (e-mail: \texttt{rlindner@cdc.informatik.tu-darmstadt.de})
}
}
} 

\markboth{
}{Barreto \MakeLowercase{\textit{et al.}}: Decoding square-free Goppa codes over $\F_p$}

\maketitle

\begin{abstract}
We propose a new, efficient non-deterministic decoding algorithm for square-free Goppa codes over $\F_p$ for any prime $p$. If the code in question has degree $t$ and the average distance to the closest codeword is at least $(4/p)t + 1$, the proposed decoder can uniquely correct up to $(2/p)t$ errors with high probability. The correction capability is higher if the distribution of error magnitudes is not uniform, approaching or reaching $t$ errors when any particular error value occurs much more often than others or exclusively. This makes the method interesting for (semantically secure) cryptosystems based on the decoding problem for permuted and punctured Goppa codes.
\end{abstract}

\begin{IEEEkeywords}
Algorithms, Cryptography, Decoding, Error correction
\end{IEEEkeywords}

\section{Introduction}\label{sec:intro}

\IEEEPARstart{P}{ublic-key} cryptosystems based on coding theory, known for nearly as long as the very concept of asymmetric cryptography itself, have recently been attracting renewed interest because of their apparent resistance even against attacks mounted with the help of quantum computers, constituting a family of so-called post-quantum cryptosystems~\cite{bernstein-buchmann-dahmen}. However, not all error-correcting codes are suitable for cryptographic applications. The most commonly used family of codes for such purposed is that of Goppa codes, which remain essentially unharmed by cryptanalysis efforts despite considerable efforts and progress in the area.

Introduced in 1970, Goppa codes~\cite{goppa} are a subfamily of alternant codes, i.e. subfield subcodes of Generalized Reed-Solomon codes.
Let $q = p^m$ for some prime $p$ and some $m > 0$. A Goppa code $\Gamma(L, g)$ over $\F_p$ is determined by a sequence $L \in \F_q^n$ of distinct values, and a polynomial $g \in \F_q[x]$ of degree $t := \deg(g)$ whose roots are disjoint from $L$. Goppa codes have by design a minimal distance at least $t + 1$ by virtue of being alternant. Certain codes are known to have better minimum distances than this lower bound. Thus, binary Goppa codes where $g$ is square-free are known to have a larger minimum distance of at least $2t + 1$ instead. A family of codes where $g$ is not square-free have minimum distance at least $t + \gamma - 1$ for some $2 < \gamma < t - 1$, which is known as the Hartmann-Tzeng bound for Goppa codes~\cite{tzeng-hartmann,hartmann-tzeng}.

The class of Sugiyama-Kasahara-Hirasawa-Namekawa codes~\cite{sugiyama-kasahara-hirasawa-namekawa} where $g = h^{r-1}$ for some square-free monic polynomial $h \in \F_q[x]$ and some power $r$ of $p$ dividing $q$, which constitute a proper superclass of the so-called ``wild'' codes where $h$ is restricted to being irreducible~\cite{bernstein-lange-peters:wild}, have minimum distance at least $r\deg(h) + 1$ rather than $(r - 1)\deg(h) + 1$.
Although it is known that the minimum distance of a Goppa code of degree $t$ is at least $t + 1$ and there are known cases where it is higher (up to $2t + 1$, as it happens for binary square-free Goppa codes), systematically determining the true minimum distance of any given subfamily of Goppa codes remains largely an open problem, yet it is an important metric as it determines not only how many errors can always be uniquely corrected, but indirectly the security level and the key sizes of the cryptosystems based on each given code.

Apart from brute force, known decoding methods for alternant codes can in general correct only about half as many errors as a binary square-free Goppa code is in principle able to correct~\cite{berlekamp,sugiyama-kasahara-hirasawa-namekawa:decoding} (see also~\cite{macwilliams-sloane}). Even the Guruswami-Sudan algorithm~\cite{guruswami-sudan}, which exceeds the $t/2$ limit, can only correct about $n - \sqrt{n(n - t)} \approx t/2 + (t/2)^2/(2n - t)$ errors. In contrast, Patterson's algorithm can correct all $t$ design errors of binary Goppa codes, as can an alternant decoder using the equivalence $\Gamma(L, g) = \Gamma(L, g^2)$ albeit at a larger computational cost. Bernstein's list decoding method~\cite{bernstein} goes somewhat further, attaining a correction capability of $n - \sqrt{n(n - 2t - 2)} \approx t + 1 + (t + 1)^2/2(n - t - 1)$ errors for binary irreducible Goppa codes, although decoding is ambiguous if the actual distance is not proportionally higher. Similar techniques can in principle correct about $n - \sqrt{n(n - rt)} \approx rt/2 + (rt/2)^2/(2n - rt)$ errors for wild codes~\cite{bernstein-lange-peters:wild}. Bernstein's method does not reach the $q$-ary Johnson radius, but a more recent algorithm by Augot \emph{et al.} does so in the binary case \cite{augot-barbier-couvreur}.

\subsection{Our Results}\label{sec:our-contrib}

Our contribution in this paper is a non-deterministic decoding algorithm for square-free Goppa codes over $\F_p$ for any prime $p$. The method generalizes Patterson's approach and can potentially correct up to $(2/p)t$ errors, on the condition that a suitable short vector can be found in a certain polynomial lattice. In particular, our method corrects $(2/3)t$ errors in characteristic~3, exceeding the $t/2$ barrier when the average distance to the closest codeword is at least $(4/3)t + 1$.
In experiments conducted to assess the practical behaviour of our proposal, the result of the decoding is observed to be unique with overwhelming probability for irreducible ternary Goppa codes chosen uniformly at random,
hinting that, for the vast majority of such codes, the average distance to the closest codeword is sufficiently higher than the ensured minimum distance.
Besides, our proposal can probabilistically correct a still larger number of errors that approaches and reaches $t$ depending on the distribution of error magnitudes. For instance, the method corrects up to $t$ errors with high probability if all error magnitudes are known to be equal.

This feature outperforms even Sugiyama-Kasahara-Hirasawa-Namekawa and wild codes and the associated decoding methods, and is particularly interesting for cryptographic applications like McEliece encryption~\cite{mceliece} under the Fujisaki-Okamoto or similar semantic security transform~\cite{fujisaki-okamoto}, where error magnitudes can be chosen by convention to be all equal. In that case, even if an attacker could somehow derive a generic alternant decoder from the public code that is typical in such systems (a strategy exploited e.g. in~\cite{faugere-otmani-perret-tillich}), he will not be able to correct more than about $t/2$ errors out of roughly $t$ that can be corrected with the private trapdoor enabled by our proposal, facing an infeasible workload of about $(p-1)\binom{n}{t/2}/\binom{t}{t/2}$ guesses to mount a complete attack. This makes Goppa codes in odd characteristic, which have already been shown to sport some potential security advantages over binary ones~\cite{peters}, even more attractive in practice.

For the benefit of implementors, we describe a dedicated version of the Mulders-Storjohann algorithm to convert the particular lattice basis encountered during the decoding process to weak Popov form.
The computational complexity of this step is then shown to be $O(p^3 t^2)$.

\subsection{Organization of the Paper}

The remainder of this document is organized as follows. We provide basic notions in Section~\ref{sec:prelim}. We recapitulate Patterson's decoding algorithm for binary irreducible Goppa codes in Section~\ref{sec:patterson}, and extend it to square-free codes in characteristic~$p$ in Section~\ref{sec:extension}, showing that it can correct $(2/p)t$ errors in general and up to $t$ errors depending on the distribution of error magnitudes. We conclude in Section~\ref{sec:conclusion}.

\section{Preliminaries}\label{sec:prelim}

Matrix indices will start from 0 throughout this paper, unless otherwise stated. Let $p$ be a prime and let $q = p^m$ for some $m > 0$. The finite field of $q$ elements is written $\F_q$. For sequences of elements $(g_1, \dots, g_t) \in \F_q^t$, $(L_0, \dots, L_{n-1}) \in \F_q^n$ and $(d_0, \dots, d_{n-1}) \in \F_q^n$ for some $t, n \in \N$, we denote by $\toep(g_1, \dots, g_t)$ the $t \times t$ Toeplitz matrix with elements $T_{ij} := g_{t - i + j}$ for $j \leqslant i$ and $T_{ij} := 0$ otherwise; by $\vdm_t(L_0, \dots, L_{n-1})$ the $t \times n$ Vandermonde matrix with elements $V_{ij} := L_j^i$, $0 \leqslant i < t$, $0 \leqslant j < n$; and by $\diag(d_0, \dots, d_{n-1})$ the diagonal matrix with diagonal elements $D_{jj} := d_j$, $0 \leqslant j < n$.

\subsection{Error Correcting Codes}

Let $L = (L_0, \dots, L_{n-1}) \in \F_q^n$ be a sequence (called the \emph{support}) of $n \leqslant q$ distinct elements, and let $g \in \F_q[x]$ be an irreducible monic polynomial of degree $t$ such that $g(L_i) \neq 0$ for all $i$. For any word $e \in \F_p^n$ we define the corresponding \emph{Goppa syndrome} polynomial $s_e \in \F_q[x]$ to be:
\[
s_e(x) = \sum_{i=0}^{n-1}{\dfrac{e_i}{x - L_i} \mod{g(x)}}.
\]
Thus the syndrome is a linear function of $e$.
The $[n, \geqslant n - mt, \geqslant t+1]$ \emph{Goppa code} over $\F_p$ with support $L$ and generator polynomial $g$ is the kernel of the syndrome function applied to elements from $\F_p$, i.e. the set $\Gamma(L, g) := \{e \in \F_p^n \mid s_e \equiv 0 \mod{g}\}$.

Writing $s_e(x) := \sum_i{s_i x^i}$ for some $s \in \F_q^{t}$, one can show that $s^{\tp} = H e^{\tp}$ where the \emph{parity-check matrix} $H$ has the form 
\begin{equation}\label{eq:parity-check}
\begin{array}{rcl}
H &=& \toep(g_1, \dots, g_t)\\
  &\cdot& \vdm_t(L_0, \dots L_{n-1})\\
  &\cdot& \diag(g(L_0)^{-1}, \dots, g(L_{n-1})^{-1})
\end {array}
\end{equation}
Thus $H = TVD \in \F_q^{t \times n}$, where $T \in \F_q^{t \times t}$ is a Toeplitz matrix, $V \in \F_q^{t \times n}$ is a Vandermonde matrix, and $D \in \F_q^{n \times n}$ is a diagonal matrix.

Since a Goppa code is a $\F_p$-subfield subcode, it is possible to express the syndrome function in terms of a parity-check matrix $\bar{H} \in \F_p^{mt \times n}$ using the so-called \emph{trace construction} (see e.g.\cite[Ch.~7, \S~7]{macwilliams-sloane}). This is useful to obtain a syndrome $\bar{s} \in \F_p^{mt}$ equivalent to $s \in \F_q^t$ above while keeping the arithmetic operations in $\F_p$ rather than $\F_q$, even though it is not immediately useful for decoding, at which point a syndrome over $\F_q$ has to be assembled by inverting the trace construction.

The \emph{syndrome decoding problem} consists of computing the error pattern $e$ given its syndrome $s_e$. Knowledge of the code structure in the form of the support $L$ and the polynomial $g$ makes this problem solvable in polynomial time, with some constraints relating the weight of $e$ to the degree of $g$.

\subsection{Polynomial Lattices}

Let $A \in \F_q[x]^{n \times m}$ be a polynomial matrix, and let $r$ denote its rank (i.e. assume that $A$ has $r$ linearly independent rows). The (polynomial) lattice $\lat(A)$ over $\F_q[x]$ spanned by the rows of $A$ is
\[
\lat(A) = \{ (u_0, \dots, u_{n-1}) A \in \F_q[x]^m \mid (u_0, \dots, u_{n-1}) \in \F_q[x]^n \}.
\]

The notion of length which we will use for $f \in \F_q[x]$ is $|f| = \deg(f)$. For polynomial vectors $v \in \F_q[x]^n$ we adopt the notion of maximal degree length: $|v| = \max_i |v_i|$. This notion is coarse enough that, contrary to integer lattices where finding even an approximation to the shortest vector by a constant factor is a hard problem~\cite{micciancio:focs}, reducing a basis for a polynomial lattice can be achieved in polynomial time. The following result by Mulders and Storjohann holds~\cite{mulders-storjohann}:

\begin{theorem}\label{thm:mulders-storjohann}
There exists an algorithm which finds the shortest nonzero vector in the $\F_q[x]$-module generated by the rows of $A$ with $O(m n r d^2)$ operations in $\F_q$, where $d = \max \{ \deg(A_{ij}) \mid 1 \leqslant i \leqslant n, \;  \leqslant j \leqslant m \}$.
\end{theorem}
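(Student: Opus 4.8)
The plan is to realize the algorithm promised by the theorem as a polynomial-lattice analogue of size reduction, driving an arbitrary spanning matrix $A$ to \emph{weak Popov form}, from which the shortest vector can be read off directly. For a nonzero vector $v \in \F_q[x]^m$ define its \emph{leading position} as the largest index $j$ at which $\deg(v_j) = |v| = \max_k \deg(v_k)$, and say that a matrix is in weak Popov form when its nonzero rows have pairwise distinct leading positions. First I would prove the decisive structural fact: if $B$ is any basis of $\lat(A)$ in weak Popov form, then every nonzero $v \in \lat(A)$ satisfies $|v| \geqslant \min_i |B_i|$, so that a shortest row of $B$ is a shortest nonzero vector of the lattice. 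This follows from a ``no-cancellation'' argument: writing $v = \sum_i u_i B_i$ with $u_i \in \F_q[x]$ and letting $\delta = \max_i(\deg u_i + |B_i|)$, the coefficient of $x^{\delta}$ in the entry of $v$ at the leading position of the row attaining the maximum cannot vanish, because the distinct leading positions keep the top-degree contributions of the rows from interfering; hence $|v| = \delta \geqslant \min_i |B_i|$.

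Next I would specify the reduction itself. While the current row set is not in weak Popov form, there exist two rows $B_i, B_j$ sharing a leading position $p$, say with $|B_i| \geqslant |B_j|$; applying the \emph{simple transformation} $B_i \gets B_i - c\,x^{|B_i|-|B_j|} B_j$, where $c$ is the ratio of the two leading coefficients at position $p$, cancels the top-degree term of $B_i$ at $p$. Each such step is a unimodular row operation, so $\lat(A)$ is preserved, and because $|B_j| \leqslant |B_i|$ the modified row's maximal degree cannot grow; in particular all entry degrees remain $\leqslant d$ throughout, so one simple transformation costs $O(md)$ operations in $\F_q$ (a scalar multiplication, a shift, and a subtraction of $m$ polynomials of degree at most $d$).

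The termination-and-complexity analysis is where the real work lies, and it is the step I expect to be the main obstacle. The natural monovariant is the degree sum $\Phi = \sum_i |B_i|$ (with $|0| := -1$), which a simple transformation never increases; $\Phi$ strictly decreases on any step that lowers $\deg(B_i)$, and there are at most $O(nd)$ such steps since $\Phi$ starts at $\leqslant nd$. The delicate case is a step that leaves $\deg(B_i)$ unchanged but merely moves its leading position strictly to the left. I would control these with a secondary, lexicographically subordinate potential on the tuple of leading positions, showing that runs of degree-preserving steps must shorten and cannot recur too often, and thereby bound the total number of simple transformations. Combining the iteration bound with the per-step cost $O(md)$ is what yields the claimed $O(mnrd^2)$, the factor $r$ reflecting that only an $r$-dimensional sub-module ever carries positive degree while the remaining $n-r$ rows are annihilated to zero during the reduction.

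Finally I would confirm correctness of the output: when the algorithm halts the leading positions are pairwise distinct, so the resulting basis is in weak Popov form, and the structural fact from the first step identifies its shortest row as a shortest nonzero vector of $\lat(A)$, completing the proof.
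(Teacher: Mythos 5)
You should know at the outset that the paper never proves Theorem~\ref{thm:mulders-storjohann}: it is imported wholesale from \cite{mulders-storjohann}, and Appendix~\ref{app:weak-Popov} only restates the supporting machinery (pivot indices, weak Popov form, simple transformations of the first kind), with the two key lemmas again given by citation. Your proposal reconstructs exactly that machinery, and the correctness half of it is essentially sound. One precision is needed in your ``no-cancellation'' lemma: when several rows attain $\delta = \max_i(\deg u_i + |B_i|)$, you must look at the leading position of the maximizer whose leading position is \emph{largest}. At that column, every other maximizer contributes degree strictly below $\delta$ (its entry there is strictly dominated, by the definition of leading position); at the column of a maximizer with a smaller leading position, the other maximizers' contributions are only bounded by $\leqslant\delta$, so cancellation there cannot be excluded. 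With that fix, your structural lemma, the degree-non-increase of simple transformations, the per-step cost $O(md)$, and termination are all correct.

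The genuine gap is precisely where you flagged it, and nothing you wrote closes it: the operation count. Your potentials bound the number of simple transformations by $O(nmd)$ --- each row can take at most $m-1$ degree-preserving, pivot-shifting steps between consecutive degree drops, and at most $O(nd)$ degree drops occur overall --- which gives $O(nm^2d^2)$ field operations, not the claimed $O(mnrd^2)$. Since $r \leqslant \min(n,m)$, this is strictly weaker in general, and the factor $r$ is the entire content of the refined bound. Your closing remark that ``only an $r$-dimensional sub-module ever carries positive degree while the remaining $n-r$ rows are annihilated to zero'' is not an argument: the $n-r$ dependent rows are generically nonzero throughout the reduction and only vanish at the end, and neither of your potentials ties the amount of pivot wandering to the rank rather than to $m$. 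Proving the theorem as stated requires the rank-sensitive accounting of Mulders and Storjohann (or an equivalent argument bounding the number of simple transformations by $O(nrd)$), which is missing here. It is worth noting, though, that for the only use the paper makes of the theorem --- reducing the basis $A_\phi$ of Equation~\ref{eq:lattice-basis}, where $n = m = r = p$ and $d \leqslant t$ --- your weaker bound already gives $O(p^3t^2)$, so your argument would suffice for the paper's application even though it does not establish the general statement.
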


The algorithm whose existence is established by Theorem~\ref{thm:mulders-storjohann} is based on converting a given lattice basis to the so-called weak Popov form, formally defined in Appendix~\ref{app:weak-Popov}, which also contains a description of Algorithm \ref{alg:weak-Popov} and its cost behavior in the context of decoding.

The weak Popov form is not the only way to find short vectors in a polynomial lattice, and in fact this is not critical to our proposal in this paper; for instance, the method by Lee and O'Sullivan \cite{lee-osullivan}, which is related to Gr\"{o}bner bases, would appear to be an alternative. Our choice of the Mulders-Storjohann method, which computes the weak Popov form, derives from its conceptual simplicity and ease of implementation, since the result turns out to be a natural generalization of Patterson's decoding algorithm described in the next section.

\section{Patterson's Decoding Method}\label{sec:patterson}

We briefly recapitulate Patterson's decoding algorithm~\cite{patterson}, which will provide the basis for the general algorithm we propose. The goal, of course, is to compute the error pattern $e$ given its syndrome $s_e$ and the structure of $\Gamma(L, g)$.

Let $q = 2^m$, and assume we are given a binary Goppa code $\Gamma(L, g)$ where the monic polynomial $g$ is irreducible. We define the Patterson locator polynomial $\sigma \in \F_q[x]$ as:
\begin{equation}\label{eq:patterson-sigma}
\sigma(x) := \prod_{e_i = 1}{(x - L_i)}.
\end{equation}
The name \emph{locator polynomial} comes from the fact that the roots of $\sigma$ clearly indicate where errors occurred, since $\sigma(L_j) = 0 \Leftrightarrow e_j = 1$. Taking the derivative of the formal power series underlying $\sigma$, we obtain
\begin{eqnarray*}
\sigma'(x) &=& \sum_{e_i = 1}{\prod_{\substack{e_j = 1\\j \neq i}}{(x - L_j)}}\\
           &=& \sum_{e_i = 1}{\dfrac{1}{x - L_i}\prod_{e_j = 1}{(x - L_j)}}\\
           &=& \sigma(x)\sum_i{\dfrac{e_i}{x - L_i}},
\end{eqnarray*}
and hence, in $\F_q[x]/g(x)$,
\begin{equation}\label{eq:key-binary}
\sigma'(x) = \sigma(x) s_e(x) \mod{g(x)}.
\end{equation}
This is called the \emph{key equation}, and now we discuss how to solve it.

Being a polynomial in characteristic~2, $\sigma(x)$ modulo $g(x)$ can be written as
\[
\sigma(x) = a_0(x)^2 + x a_1(x)^2
\]
for some $a_0(x)$, $a_1(x)$ with $\deg(a_0) \leqslant \lfloor t/2 \rfloor$ and $\deg(a_1) \leqslant \lfloor (t - 1)/2 \rfloor$, and hence
\[
\sigma'(x) = 2\,a_0(x)\,a_0'(x) + a_1(x)^2 + 2\,a_1(x)\,a_1'(x)\,x = a_1(x)^2,
\]
since the characteristic is~2.
Therefore
\begin{eqnarray*}
a_1(x)^2 &=& \sigma'(x) = \sigma(x) s_e(x)\\
         &=& \left(a_0(x)^2 + x a_1(x)^2\right) s_e(x) \mod g(x),
\end{eqnarray*}
whence
\begin{equation}\label{eq:bezout}
a_0(x) = a_1(x) v(x) \mod{g(x)}
\end{equation}
where $v(x)$ is a polynomial satisfying $v(x)^2 = x + 1/s_e(x) \mod g(x)$. Such a polynomial surely exists in characteristic~2 if $g(x)$ is square-free: if $g(x) = \prod_i{g_i(x)}$ where each $g_i(x)$ is irreducible, then $v(x) \bmod g_i(x)$ can be computed as a square root of $x + 1/s_e(x) \bmod g_i(x)$ viewed as an element of the finite field $\F_q[x]/g_i(x)$, and $v(x) \bmod g(x)$ can then be obtained by combining the results via the Chinese Remainder Theorem. We can thus assume that $\deg(v) < t = \deg(g)$.

The last equation is actually a B\'{e}zout relation $a_0(x) = a_1(x)v(x) + \lambda(x)g(x)$, which can be solved for $a_0(x)$ and $a_1(x)$ with the restriction $\deg(a_j) \leqslant \lfloor (t - j)/2 \rfloor$ (and also $\lambda(x)$ but it is not used) using the extended Euclidean algorithm. Solutions $(a_0,a_1)$ can also be seen as short vectors in the lattice spanned by the rows of the following matrix:
\[
A =
\left[
\begin{array}{cc}
g & 0\\
v & 1
\end{array}
\right]
\]
in the sense that the degrees of these polynomials are much smaller than uniformly random vectors, since $(\lambda, a_1)A = (\lambda g + a_1 v,a_1) = (a_0, a_1)$ for some $\lambda \in \F_q[x]$, by virtue of Equation~\ref{eq:bezout}. Therefore, Algorithm \ref{alg:weak-Popov} can be used to find candidate solutions $(a_0, a_1)$.

At first glance there is no guarantee that a short vector in the lattice generated by $A$ yields the desired solution; in other words, being short is a necessary condition, but in principle not a sufficient one.
However, the fact that in the binary case the minimum code distance is known~\cite{goppa} to be at least $2t + 1$ actually restricts $\sigma$ to a single candidate, so that Algorithm \ref{alg:weak-Popov} is bound to find it. Thus, decoding is always successful up to $t$ introduced errors.

\section{Decoding Codes over $\F_p$}\label{sec:extension}

We now show how to generalize Patterson's decoding algorithm so as to correct errors for codes defined over general prime fields. Thus, let $q = p^m$ for some prime $p$ and some $m > 0$, and assume we are given an irreducible Goppa code $\Gamma(L, g)$ over $\F_p$.

Let $\phi \in \F_p \setminus \{0\}$ be a constant scalar. We define the generalized error locator polynomial to be
\begin{equation}\label{eq:locator}
\sigma_\phi(x) := \prod_i{(x - L_i)^{e_i/\phi}}
\end{equation}
where the value $e_i/\phi$ is lifted from $\F_p$ to $\Z$ (i.e. the value $e_i/\phi \in \F_p$ that occurs as an exponent is taken to mean its corresponding least non-negative integer representative, which lies in range $0 \dots p-1$).
One can easily see that this definition actually coincides with Patterson error locator polynomials as defined by Equation~\ref{eq:patterson-sigma} for $p=2$. Lifting Equation~\ref{eq:locator} to the field of rational functions in characteristic 0 and taking the derivative, we have
\begin{eqnarray*}
\sigma_\phi'(x) &=& \sum_{j}{(e_j/\phi)(x - L_j)^{e_j/\phi - 1}\prod_{i \neq j}}{(x - L_i)^{e_i/\phi}}\\
           &=& (1/\phi)\sum_{j}{\dfrac{e_j}{x - L_j}\prod_i{(x - L_i)^{e_i/\phi}}}\\
           &=& (1/\phi)\sigma_\phi(x)\sum_{j}{\dfrac{e_j}{x - L_j}},
\end{eqnarray*}
which over $\F_q[x]$ reduces to
\begin{equation}\label{eq:modif-key}
\phi \sigma_\phi'(x) = \sigma_\phi(x) s_e(x) \mod{g(x)}.
\end{equation}
This is the $\phi$-th key equation of the proposed method, which generalizes Equation~\ref{eq:key-binary} to Goppa codes over $\F_p$.
The actual $\phi$ must be chosen so as to minimize the degree of $\sigma_\phi$ (and hence maximize the number of correctable errors). One cannot expect to know \emph{a priori} the value of $\phi$, but since there are only $p-1$ possibilities, the error correction strategy will be to try each of them in turn.

Notice that the maximum number of correctable errors can be, and usually is, less than $t$, since the degree of $\sigma_\phi$ exceeds the number of roots in the presence of multiple roots. The following theorem provides an upper bound for how many errors can be corrected by solving Equation~\ref{eq:modif-key} when the distribution of error magnitudes is not taken into account.

\begin{theorem}
The maximum number of errors that can be corrected by solving Equation~\ref{eq:modif-key} independently of the distribution of error magnitudes is $w = (2/p)t$.
\end{theorem}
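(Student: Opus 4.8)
The plan is to count the degree of $\sigma_\phi$ and play it off against the degree $t$ of $g$, since the generalized key equation~\eqref{eq:modif-key} can recover $\sigma_\phi$ only when its degree does not exceed $t$, exactly as the binary case needs $\deg\sigma \leqslant t$. Writing $\overline{u}$ for the integer in $\{1,\dots,p-1\}$ that represents a nonzero $u \in \F_p$, the definition in Equation~\ref{eq:locator} gives, for an error pattern of Hamming weight $w$ (that is, $w$ indices $i$ with $e_i \neq 0$),
\[
\deg \sigma_\phi = \sum_{i\,:\,e_i \neq 0} \overline{e_i/\phi}.
\]
The number of errors $w$ is the number of distinct roots of $\sigma_\phi$, which is generally smaller than this degree whenever some exponent exceeds $1$.

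The key step is to average the degree over the $p-1$ admissible scalars $\phi$. For a fixed error position $i$, as $\phi$ runs through $\F_p \setminus \{0\}$ the value $e_i/\phi$ runs bijectively through $\F_p \setminus \{0\}$, because inversion followed by multiplication by the fixed nonzero $e_i$ is a bijection of the multiplicative group; hence the integer representatives $\overline{e_i/\phi}$ are exactly a permutation of $1, 2, \dots, p-1$, summing to $p(p-1)/2$. Summing over the $w$ error positions and dividing by $p-1$ yields
\[
\frac{1}{p-1}\sum_{\phi \neq 0} \deg\sigma_\phi = \frac{w}{p-1}\cdot\frac{p(p-1)}{2} = \frac{pw}{2}.
\]
By the pigeonhole principle at least one $\phi$ satisfies $\deg\sigma_\phi \leqslant pw/2$, and since the strategy tries all $p-1$ candidates it always reaches this minimizing $\phi$. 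Solving Equation~\eqref{eq:modif-key} then succeeds as soon as $pw/2 \leqslant t$, i.e. $w \leqslant (2/p)t$; this already gives correction of $(2/p)t$ errors independently of the magnitude distribution, and specializing to $p=2$ recovers Patterson's full capability of $t$ errors.

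It remains to argue that this bound is best possible in the absence of distributional assumptions. For this I would exhibit a worst-case family: take $w = (p-1)s$ errors comprising exactly $s$ errors of each magnitude $1, 2, \dots, p-1$. Replacing every magnitude by its product with a fixed $\phi^{-1}$ merely permutes the value set $\{1,\dots,p-1\}$, so the multiset of exponents, and therefore $\deg\sigma_\phi = s\cdot p(p-1)/2 = pw/2$, is identical for every $\phi$. Here the minimum over $\phi$ meets the average $pw/2$ exactly, so Equation~\eqref{eq:modif-key} is solvable only when $pw/2 \leqslant t$; any $w > (2/p)t$ forces every $\sigma_\phi$ to have degree exceeding $t$ and the decoder fails. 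The two directions together pin the guaranteed correction radius at $w = (2/p)t$.

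The main obstacle I anticipate is justifying the precise degree threshold $\deg\sigma_\phi \leqslant t$ at which Equation~\eqref{eq:modif-key} becomes solvable. Confirming it requires the characteristic-$p$ analogue of the binary splitting $\sigma = a_0^2 + x a_1^2$, namely the unique decomposition $\sigma_\phi(x) = \sum_{k=0}^{p-1} x^k a_k(x)^p$, whose derivative annihilates every $p$-th power and so linearizes the recovery into a short-vector computation on a lattice generalizing the matrix $A$ of Section~\ref{sec:patterson}. The delicate part is checking that the floor terms in the resulting degree bounds on the $a_k$ collapse to the clean constant $2/p$ rather than a slightly weaker value; the averaging identity itself is routine once the bijection of $\F_p \setminus \{0\}$ is in hand.
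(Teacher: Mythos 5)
Your proof is correct, and its central computation coincides with the paper's: the equidistributed error-magnitude pattern forces $\deg(\sigma_\phi) = \sum_v \overline{(v/\phi)}\,w_v = \frac{w}{p-1}\bigl(1 + 2 + \cdots + (p-1)\bigr) = pw/2$ for \emph{every} $\phi$, since division by $\phi$ merely permutes $\F_p \setminus \{0\}$, and correctability then demands $pw/2 \leqslant t$. Where you genuinely go beyond the paper is the achievability direction. The paper's proof only treats the uniform case and asserts it is extremal (its inequality $\deg(\sigma_\phi) \leqslant \sum_v (v/\phi) w_{\max}$, read literally, bounds the degree by $w_{\max}\,p(p-1)/2$, which for skewed distributions is far weaker than $t$ and so does not by itself show that every pattern of weight $w \leqslant (2/p)t$ is decodable). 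Your averaging-plus-pigeonhole lemma fills exactly this gap: summing $\overline{e_i/\phi}$ over all $\phi \neq 0$ gives $p(p-1)/2$ per error position, so the average of $\deg(\sigma_\phi)$ over the $p-1$ scalars is exactly $pw/2$, and the minimizing $\phi$ — which the decoder will find, since it tries all scalars — satisfies $\deg(\sigma_\phi) \leqslant pw/2 \leqslant t$. This makes the ``maximum'' claim a genuine two-sided statement rather than a single worst-case computation, at no extra cost in machinery. Your identification of $\deg(\sigma_\phi) \leqslant t$ as the solvability threshold is also the right one; it follows from the decomposition $\sigma_\phi(x) = \sum_{k=0}^{p-1} x^k a_k(x)^p$ with $\deg(a_k) \leqslant \lfloor (t-k)/p \rfloor$ in Equation~\ref{eq:decompose}, exactly as you anticipate.

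Two minor caveats, neither of which puts you below the paper's own level of rigor: your tightness family only exhibits failing patterns for weights $w$ that are multiples of $p-1$ (for other weights the near-uniform pattern can, due to integer rounding, still satisfy $\deg(\sigma_\phi) \leqslant t$ for some $\phi$ — e.g. $p=3$, $t=4$, $w=3$ with magnitude counts $w_1 = 2$, $w_2 = 1$ gives degree $4 \leqslant t$ at $\phi = 1$), so the stated bound $w = (2/p)t$ should be understood, here as in the paper, up to such rounding effects; and solvability of the degree constraints is necessary but not sufficient for unique decoding, which is why the paper's full algorithm remains probabilistic — but that caveat lies outside what this theorem asserts.
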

\begin{proof}
Let $w_v$ denote the number of times the magnitude $v$ occurs in an error pattern of weight $w$, so that $\sum_v{w_v} = w$.
Since we are working with a Goppa code, the constraint for correctability is $\deg(\sigma_\phi) = \sum_v{(v/\phi)w_v} \leqslant t$. In the extreme situation when the weight of the error pattern reaches $w$, the most often error magnitude occurs $w_{\max} \geqslant w/(p-1)$ times, attaining the lower bound when all error magnitudes occur with equal frequency.
In that case, $\deg(\sigma_\phi) \leqslant \sum_v{(v/\phi)w_{\max}} = (1 + 2 + \cdots + (p-1)) w/(p-1) = wp/2 \leqslant t$, and hence no more than errors than $w = (2/p)t$ can be corrected independently of the distribution of magnitudes, as claimed.
\end{proof}

Since the proposed method coincides with Patterson's for $p = 2$, it is not surprising that $t$ errors can be corrected in that case. However, in characteristic~3 the number of potentially correctable errors is $(2/3)t$, non-deterministically exceeding the limit of $t/2$ errors attainable by previously known decoding methods for codes of degree $t$, except in the case of so called ``wild codes'' \cite{bernstein-lange-peters:wild} whereby the Goppa polynomial is a $(p-1)$-th power of an irreducible polynomial (our method, by contrast, applies when that polynomial is square-free, as we will see in Section~\ref{sec:solving}).

Despite the low general limit of $(2/p)t$ correctable errors for $p > 3$, it is still possible to exceed that limit in any odd characteristic if the distribution of error magnitudes is unbalanced. Indeed, all that is required to get a chance of uniquely decoding a word containing $w \leqslant t$ errors is that $\deg(\sigma_\phi) \leqslant t$ for some choice of $\phi$ and that the actual distance from the right codeword to any other codeword be at least $2w + 1$.

The actual number of correctable errors depends heavily on the distribution of error magnitudes and has to be computed in a case-by-case basis, always laying in the range $(2/p)t$ to $t$. In particular, if all error magnitudes are equal, in principle one could correct $t$ errors, even though this is a statistical rather than deterministic behavior.
This is especially useful for cryptographic applications involving an all-or-nothing transform~\cite{rivest}, as it happens e.g. for a semantically secure encryption scheme involving the McEliece one-way trapdoor function~\cite{mceliece} and the Fujisaki-Okamoto conversion~\cite{fujisaki-okamoto}. In such scenarios, the magnitudes of the introduced errors can be chosen to be all or nearly all equal by convention, making the proposed decoder attractive for its higher decodability bound, under the explicit assumption that decoding them remains hard.

\subsection{Solving the Key Equation}\label{sec:solving}

We now focus on actually solving Equation~\ref{eq:modif-key}.
Being a polynomial in characteristic~$p$, $\sigma_\phi(x)$ can be written as
\begin{equation}\label{eq:decompose}
\sigma_\phi(x) = \sum_{k=0}^{p-1}{x^k a_k(x)^p}
\end{equation}
for some $a_k(x)$ with $\deg(a_k) \leqslant \lfloor (t - k)/p \rfloor$, $0 \leqslant k \leqslant p-1$, and hence
\begin{eqnarray*}
\sigma_\phi'(x) &=& \sum_{k=0}^{p-1}{\left(k x^{k-1} a_k(x)^p + p x^k a_k(x)^{p-1} a_k'(x)\right)}\\
           &=& \sum_{k=1}^{p-1}{k x^{k-1} a_k(x)^p}
\end{eqnarray*}
since the characteristic is $p$.
Therefore
\begin{eqnarray*}
\phi\sum_{k=1}^{p-1}{k x^{k-1} a_k(x)^p} &=& \phi\sigma_\phi'(x) \;=\; \sigma_\phi(x) s_e(x)\\
&=& \left(\sum_{k=0}^{p-1}{x^k a_k(x)^p}\right) s_e(x)
\bmod{g(x)},
\end{eqnarray*}
whence
\begin{equation}\label{eq:lattice}
a_0 + \sum_{k=1}^{p-1}{a_k(x) v_k(x)} = 0 \mod g(x)
\end{equation}
where the $v_k(x)$ are polynomials satisfying $v_k(x)^p = x^k - \phi k x^{k-1} / s_e(x) \mod g(x)$. Such polynomials surely exist in characteristic~$p$ if $g(x)$ is square-free: if $g(x) = \prod_i{g_i(x)}$ where each $g_i(x)$ is irreducible, then $v_k(x) \bmod g_i(x)$ can be computed as a $p$-th root of $x^k - \phi k x^{k-1} / s_e(x) \bmod g_i(x)$ viewed as an element of the finite field $\F_q[x]/g_i(x)$, and $v_k(x) \bmod g(x)$ can then be obtained by combining the results via the Chinese Remainder Theorem. We can thus assume that $\deg(v_k) < t = \deg(g)$.

The Diophantine equation~\ref{eq:lattice} has to be solved for $a_k(x)$ with the stated restriction on their degrees. Solutions $(a_0, a_1, \dots, a_{p-1})$ can be seen as short vectors in the lattice spanned by the rows of the matrix
\begin{equation}\label{eq:lattice-basis}
A_\phi =
\left[
\begin{array}{ccccc}
g        & 0      & 0      & \dots  & 0\\
-v_1     & 1      & 0      & \dots  & 0\\
-v_2     & 0      & 1      & \dots  & 0\\
\vdots   & \vdots & \vdots & \ddots & \vdots\\
-v_{p-1} & 0      & 0      & \dots  & 1
\end{array}
\right],
\end{equation}
since, by virtue of Equation~\ref{eq:lattice}, one has $(\lambda, a_1, \dots, a_{p-1}) A_\phi = (\lambda g - \sum_{k=1}^{p-1} a_k(x) v_k(x), a_1, \dots, a_{p-1}) = (a_0, a_1, \dots, a_{p-1})$ for some $\lambda \in \F_q[x]$.
Therefore, Algorithm \ref{alg:weak-Popov} can be used to find candidate solutions $(a_0, \ldots, a_{p-1})$.

The method is applicable whenever one can actually invert $s \bmod g$ and then compute the $p$-roots needed to define the $v_k$ polynomials. This is always the case when $g$ is irreducible, but not exclusively so. Indeed, to compute the $v_k$ it suffices that $g$ is square-free and that $s$ is invertible modulo each of the irreducible factors of $g$, since in this case the $v_k$ can be easily computed modulo those irreducible factors and finally recovered via the Chinese Remainder Theorem. 

Theorem~\ref{thm:mulders-storjohann} ensures a cost not exceeding $O(p^3 t^2)$ $\F_q$ operations for computing short vectors in $\lat(A_\phi)$.

\subsection{Estimating the Success Probability}\label{sec:success-prob}

Regrettably the ability to find shorts vectors in lattice $\lat(A_\phi)$ does not mean that any such vector yields a solution to Equation~\ref{eq:modif-key}. We will now see that, fortunately, the proposed method has a surprisingly favourable probability of finding the right $(a_0,\ldots,a_{p-1})$ that solves Equation~\ref{eq:modif-key}.

As a cautionary note, we stress that a full theoretical analysis of the failure probability remains, to the best of our knowledge, an open problem. Because the distance notion here is not Euclidean, but rather that of Hamming, Minkowski's theorem on the existence of a lattice point in any large enough convex set does not appear to apply in our case. Also, the analysis would also appear to require a detailed theory on the distribution of the average distance between a vector and the closest codeword whose magnitudes satisfy some constraint (like being all equal, or following a highly skewed distribution), which to the best of our knowledge is too an open problem.
As a consequence, the failure probability estimates we provide are conjectured on an empirical basis. Namely, they result from experiments conducted on a large number of random Goppa codes at which our decoding method is targeted, and for each of those codes, on a large number of decoding attempts on random error patterns following the particular magnitude constraint (equal for all error positions) for which the decoder works best.

In a successful decoding, the reduced basis for lattice $\lat(A_\phi)$ leads to candidates for $\sigma_\phi$ with degree $t$ onward, of which of course only the candidate with the smallest degree is the correct one. Spurious candidates of degree close to $t$ result from random-looking short (albeit not shortest) vectors in the reduced basis and are usually harmless. But the fact that those short vectors are ``random-looking'' means they are also a threat: if by chance they are such that the coefficient of the highest-degree term in the associated spurious $\sigma_\phi$ vanishes, $\deg(\sigma_\phi)$ becomes $t$ or less. Since this is connected with the vanishing of a coefficient from $\F_q$, this event happens with probability $1/q$ assuming that short spurious vectors in $\lat(A_\phi)$ are approximately uniformly distributed.

In general, when trying to correct $w < t$ errors of equal magnitude for a uniformly random irreducible Goppa code, the top $t + 1 - w$ coefficients in the spurious $\sigma_\phi$ must vanish to interfere with the decoding process, whence the probability of successful decoding is roughly $1 - 1/q^{t + 1 - w}$. This matches the empirically observed behaviour of the proposed method in odd characteristic. Not surprisingly, the method always succeeds for binary codes, since it reduces to Patterson's algorithm and the minimum code distance is known to be at least $2t + 1$.

Table~\ref{tab:results} illustrates the results of experiments in Magma~\cite{magma} supporting the conjecture that the probability of successful decoding is roughly $\Pr_{suc} := 1 - 1/q^{t + 1 - w}$.
For each quadruple $(p, m, t, w)$, a set of 10000 Goppa codes of maximum length $n = p^m$ and degree $t$ plus an error pattern of length $n$, weight $w$ and all magnitudes equal (to a single random value in $\F_p \setminus \{0\}$) were randomly generated, and the proposed method was then applied to decode the syndrome of that error pattern. The predicted number of successful decodings, $N_{pre}$, is then compared with the actually observed number $N_{obs}$ of successful decodings.
For each combination of $p$ and $m$, the first listed $t$ is the largest integer satisfying $p^m - mt > 0$. The number $w$ of introduced errors in then decreased starting from $t$ until the probability of success exceeds 0.9999.
Since $\Pr_{suc}$ is close to 1 for large $q$, reasonably small values are chosen for all parameters so that the probability of decoding failure is large enough to be easily discernible. This is also the reason why more detail is provided for smaller parameters. We omit the results for characteristic~2, since in all tests we conducted no decoding failure was observed, as expected. We stress that these examples are not meant by any means for practical cryptographic use.

\begin{table}[htp]\centering
\caption{Experimental assessment of the probability of decoding success}\label{tab:results}
\begin{tabular}{ccccccc}\hline
$p$ & $m$ & $t$ & $w$ & $\Pr_{suc}$ & $N_{pre}$ & $N_{obs}$\\
\hline
 3  &  3  &   8 &   8 & 0.962963    &      9630 &  9670\\
 3  &  3  &   8 &   7 & 0.998628    &      9986 &  9992\\
 3  &  3  &   8 &   6 & 0.999949    &      9999 &  9999\\
\\
 3  &  3  &   7 &   7 & 0.962963    &      9630 &  9639\\
 3  &  3  &   7 &   6 & 0.998628    &      9986 &  9989\\
 3  &  3  &   7 &   5 & 0.999949    &      9999 & 10000\\
\\
 3  &  3  &   6 &   6 & 0.962963    &      9630 &  9645\\
 3  &  3  &   6 &   5 & 0.998628    &      9986 &  9991\\
 3  &  3  &   6 &   4 & 0.999949    &      9999 & 10000\\
\\
 3  &  4  &  20 &  20 & 0.987654    &      9877 &  9883\\
 3  &  4  &  20 &  19 & 0.999848    &      9998 &  9997\\
 3  &  4  &  20 &  18 & 0.999998    &     10000 & 10000\\
\\
 5  &  2  &  12 &  12 & 0.960000    &      9600 &  9612\\
 5  &  2  &  12 &  11 & 0.998400    &      9984 &  9985\\
 5  &  2  &  12 &  10 & 0.999936    &      9999 & 10000\\
\\
 5  &  3  &  41 &  41 & 0.992000    &      9920 &  9924\\
 5  &  3  &  41 &  40 & 0.999936    &      9999 & 10000\\
\\
 7  &  2  &  24 &  24 & 0.997085    &      9971 &  9989\\
 7  &  2  &  24 &  23 & 0.999992    &      9999 & 10000\\
\\
11  &  2  &  60 &  60 & 0.991736    &      9917 &  9922\\
11  &  2  &  60 &  59 & 0.999932    &      9999 &  9999\\
\hline
\end{tabular}
\end{table}

Decoding $w \leqslant (2/p)t$ errors of uniformly random magnitude for a uniformly random irreducible code is of course always successful for $p > 3$, since in that case $w$ does not exceed half the minimum code distance, which is at least $t + 1 > (4/p)t + 1 \geqslant 2w + 1$.

\subsection{Computing the Error Magnitudes}\label{sec:error-eval}

In contrast to generic alternant decoding methods, there is no need to compute an error evaluator polynomial to obtain the error magnitudes in the current proposal. After obtaining $\sigma_\phi(x)$ and finding its roots $L_j$, all that is needed to compute the corresponding error values $e_j$ is to determine the multiplicity $\mu_j$ of each root, since one can see from Equation~\ref{eq:locator} that $e_j = \phi\mu_j$.

Computing $\mu_j$ is accomplished by determining how many times $(x - L_j) \mid \sigma_\phi(x)$ whenever $\sigma_\phi(L_j) = 0$, or alternatively by finding the highest derivative of $\sigma_\phi$ such that $\sigma_\phi^{(h)}(L_j) = 0$ (and setting $\mu_j \gets h$).

Since the value of $\phi$ is not known \emph{a priori}, and even in scenarios where it is actually known beforehand, an additional syndrome check is necessary for each guessed $\phi$, and the process usually must check all possible values of $\phi$ anyway since more than one solution may exist.

\subsection{The Completed Decoder}\label{sec:decoder}

We are finally ready to state the full decoding method explicitly in Algorithm~\ref{alg:decoder}.
It can be seen as a list decoding algorithm with possible failures.
The polynomial decomposition of Equation~\ref{eq:decompose} immediately suggests a simple and efficient way to compute the $p$-th roots needed at Step~\ref{step:p-th-root}, namely, precompute $r(x) \gets \sqrt[p]{x} \mod g(x)$ and $r(x)^k \bmod g(x)$, and then compute the $p$-th root of $z(x) := \sum_k{x^k z_k(x)^p}$ as $\sqrt[p]{z(x)} \bmod g(x) \gets \sum_k{r(x)^k z_k(x)}$. The test in Step~\ref{step:test-inverse} is unnecessary if $g$ is irreducible.
To find the zeroes of $\sigma_\phi$ in Step~\ref{step:find-zeroes} one can use the Chien search technique~\cite{chien}, in which case the multiplicities of each root can be determined as part of the search, or the Berlekamp trace algorithm~\cite{berlekamp:zeroes}.

\begin{algorithm}
\caption{Decoding $p$-ary square-free Goppa codes}\label{alg:decoder}
\begin{algorithmic}[1]
\Input $\Gamma(L, g)$, a Goppa code over $\F_p$ where $g$ is square-free.
\Input $H \in \F_q^{r \times n}$, a parity-check matrix in the form of Equation~\ref{eq:parity-check}.
\Input $c' = c + e \in \F_p^n$, the received codeword with errors.
\Output a list of corrected codewords $c \in \Gamma(L, g)$ ($\varnothing$ upon failure).
\State $s^{\tp} \gets H c'^{\tp} \in \F_q^n$, $s_e(x) \gets \sum_i{s_i x^i}$. \ZComment N.B. $H c'^{\tp} = H e^{\tp}$.
\If{$\nexists \; s_e^{-1}(x) \bmod g(x)$}\label{step:test-inverse}
    \State \Return $\varnothing$ 
\EndIf
\State $S \gets \varnothing$
\For{$\phi \gets 1 \; \textbf{to} \; p-1$}\label{step:guess-scale} \ZComment guess the correct scale factor $\phi$
    \For{$k \gets 1 \; \textbf{to} \; p-1$}
        \State $u_k(x) \gets x^k + \phi k x^{k-1} / s_e(x) \mod {g(x)}$
		\State\label{step:p-th-root} $v_k(x) \gets \sqrt[p]{u_k(x)} \bmod g(x)$
	\EndFor
	\State Build the lattice basis $A_\phi$ defined by Equation~\ref{eq:lattice-basis}.
	\State Apply Algorithm~\ref{alg:weak-Popov} to reduce the basis of $\lat(A_\phi)$.
	\For{$i \gets 1 \; \textbf{to} \; p$}\label{step:take-shortest}
	    \State Let $a$ denote the $i$-th row of the reduced basis of $\lat(A_\phi)$ 
	    \For{$j \gets 0 \; \textbf{to} \; p-1$}
		    \If{$\deg(a_j) > \lfloor (t - j)/p \rfloor$}
	            \State \textbf{try next} $i$ \ZComment not a solution
		    \EndIf
		\EndFor
		\State $\sigma_\phi(x) \gets \sum_j{x^j a_j(x)^p}$
		\State\label{step:find-zeroes} Compute the set $J$ such that $\sigma_\phi(L_j) = 0$, $\forall j \in J$.
		\For{$j \in J$}
		    \State Compute the multiplicity $\mu_j$ of $L_j$.
		    \State $e_j \gets \phi \mu_j$
		\EndFor
		\If{$H e^{\tp} = s^{\tp}$}
			\State $S \gets S \cup \{c' - e\}$
		\EndIf
	\EndFor
    \State \Return $S$
\EndFor
\end{algorithmic}
\end{algorithm}

\section{Conclusion}\label{sec:conclusion}

We described a new decoding algorithm for square-free (in particular, irreducible) Goppa codes of degree $t$ over $\F_p$ that can correct $(2/p)t$ errors in general, and up to $t$ errors for certain distributions of error magnitudes of cryptographic interest. By attaining an correction capability of $(2/3)t$ errors in characteristic~3 with high probability, our method outperforms the best previously known decoder for that case, and suggests that the corresponding average distance to the closest codeword is at least $(4/3)t + 1$ for most irreducible ternary Goppa codes. Regardless of the characteristic, our proposal can correct a still larger number of errors that approaches (and probabilistically reaches) $t$ as the distribution of error magnitudes becomes ever more skewed toward the predominance of some individual value. The method can be viewed as generalizing Patterson's binary decoding procedure, and is similarly efficient in practice.

A further increase in the number of correctable errors may be possible by resorting to list decoding and by extracting more information from the decoding process along the lines proposed by Bernstein~\cite{bernstein}.
This in principle might enable the correction of approximately $n - \sqrt{n(n - (4/p)t)}$ errors in general, and possibly as many as $n - \sqrt{n(n - 2t - 2)}$ errors depending on the distribution of error magnitudes.
Furthermore, the ability to correct close to $t$ errors with high probability means that smaller keys might be adopted for coding-based cryptosystems. Properly chosen parameters would keep the probability of decoding failure below the probability of breaking the resulting schemes by random guessing, while maintaining the security at the desired level.
We leave the investigation of such possibilities for future research.

\bibliographystyle{IEEEtran}
\bibliography{GoppaDec}

\appendices

\section{The Weak Popov Form}\label{app:weak-Popov}

For ease of reference, we provide here a concise description of the Mulders-Storjohann polynomial lattice reduction algorithm based on the weak Popov form. We closely follow the exposition in~\cite{mulders-storjohann}, while attempting to make our description more implementation-friendly.

\begin{definition} For $1 \leqslant i \leqslant n$ the \emph{$i$-th pivot index vector $I^M$} of a matrix $M \in \F_q[x]^{n \times m}$ is defined as follows: if $M_{ij} = 0$ for all $1 \leqslant j \leqslant m$, then $I_i^M = 0$, otherwise
\begin{enumerate}
\item $\deg(M_{ij}) \leqslant \deg(M_{i,I_i^M})$ for $1 \leqslant j < I_i^M$,
\item $\deg(M_{ij}) < \deg(M_{i,I_i^M})$ for $I_i^M < j \leqslant m$.
\end{enumerate}
\end{definition}

\begin{definition} The \emph{carrier set} $C^M$ of a matrix $M \in \F_q[x]^{n \times m}$ is the set $\{1 \leqslant i \leqslant n \mid I_i^M \neq 0\}$.
%
The \emph{$i$-th pivot element} of $M$, denoted $P_i^M$, is the element $P_i^M := M_{i,I_i^M}$ when $I_i^M \neq 0$, otherwise $P_i^M := 0$.
\end{definition}

\begin{definition}
A matrix $M \in \F_q[x]^{n \times m}$ is said to be in \emph{weak Popov form} if the positive pivot indices of $M$ are all
different, i.e. if $\forall k, \ell \in C^M: k \neq \ell \Rightarrow I_k^M \neq I_\ell^M$.
\end{definition}

The following theorem establishes that writing a matrix in weak Popov form yields short vectors in the lattice spanned by its rows.
\begin{theorem}[\cite{mulders-storjohann}]
If matrix $M \in \F_q[x]^{n \times m}$ is in weak Popov form and $\ell$ is such that $\deg(P_\ell^M) = \min_{1 \leqslant i \leqslant n}\{\deg(P_i^M)\}$, then all vectors in the $\F_q[x]$-module generated by the rows of $M$ have degree at least $\deg(P_\ell^M)$.
\end{theorem}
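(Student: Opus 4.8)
The plan is to establish the \emph{predictable-degree property}: for any coefficient vector $u = (u_1, \dots, u_n) \in \F_q[x]^n$ with $uM \neq 0$, the degree of the combination $v := uM = \sum_i u_i M_i$ (writing $M_i$ for the $i$-th row) is exactly $D := \max\{\deg(u_i) + \deg(P_i^M)\}$, the maximum taken over those $i$ with $u_i \neq 0$. Granting this, the theorem follows immediately: each such summand has $\deg(u_i) \geq 0$ and $\deg(P_i^M) \geq \deg(P_\ell^M)$, so the predicted degree is at least $\deg(P_\ell^M)$, hence so is $\deg(v)$. I would first dispose of zero rows, which contribute nothing to the module, so that the carrier set is all of $\{1, \dots, n\}$ and every $\deg(P_i^M)$ is a genuine nonnegative integer.

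First I would observe that the pivot $P_i^M$ is the rightmost entry of maximal degree in its row, so properties (1)--(2) of the pivot index give $\deg(M_{ij}) \leq \deg(P_i^M)$ for every $j$, with \emph{strict} inequality whenever $j > I_i^M$. This already bounds the degree of every coordinate of $v$ above by $D$, so that $\deg(v) \leq D$; the real content is the matching lower bound.

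To prove $\deg(v) \geq D$, I would exhibit a single coordinate of $v$ whose degree is exactly $D$. Let $S = \{i : u_i \neq 0, \deg(u_i) + \deg(P_i^M) = D\}$ be the set of rows attaining the maximum, and let $i^\star \in S$ be the one whose pivot index $j^\star := I_{i^\star}^M$ is largest; because $M$ is in weak Popov form, the pivot indices of distinct carrier rows differ, so $i^\star$ is unambiguous. Examining the coordinate $v_{j^\star} = \sum_i u_i M_{i,j^\star}$, the $i^\star$-term has degree $\deg(u_{i^\star}) + \deg(P_{i^\star}^M) = D$ exactly, since $M_{i^\star, j^\star} = P_{i^\star}^M$. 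Every other term I would bound strictly below $D$ by two cases: for $i \in S \setminus \{i^\star\}$ one has $j^\star > I_i^M$ by the maximality and distinctness of pivot indices, so the strict part of property (2) gives $\deg(u_i M_{i,j^\star}) < \deg(u_i) + \deg(P_i^M) = D$; for $i \notin S$ one has $\deg(u_i) + \deg(P_i^M) < D$ already while $\deg(M_{i,j^\star}) \leq \deg(P_i^M)$, so again the term has degree $< D$. Hence the leading term of $v_{j^\star}$ comes solely from row $i^\star$ and cannot cancel, giving $\deg(v_{j^\star}) = D$ and therefore $\deg(v) \geq D \geq \deg(P_\ell^M)$.

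The main obstacle is precisely this no-cancellation argument, and it is the only place the weak Popov hypothesis is used. Without distinct pivot indices, two or more maximal rows could share the coordinate $j^\star$ and their leading coefficients might sum to zero, dropping $\deg(v)$ below $D$. Choosing $j^\star$ to be the pivot position of the maximal-predicted-degree row with the rightmost pivot is exactly what forces every competing contribution at that coordinate to be governed either by the strict inequality (2) or by a strictly smaller predicted degree, thereby ruling out cancellation.
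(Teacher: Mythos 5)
Your proof is correct. Note that the paper itself does not prove this statement at all: it simply defers to Lemma~8.1 of the cited Mulders--Storjohann paper, so your self-contained argument necessarily goes beyond what the authors wrote. What you have reconstructed is the standard predictable-degree argument for matrices in weak Popov form, and every step checks out: the upper bound $\deg(uM) \leqslant D$ follows from the pivot being the rightmost entry of maximal degree in its row (properties (1)--(2) of the pivot index); the lower bound follows from your no-cancellation argument at the coordinate $j^\star$, where choosing, among the rows attaining $D$, the one with the \emph{rightmost} pivot is exactly what makes the strict inequality (2) apply to every competing row in $S$, while rows outside $S$ are already short of $D$. The weak Popov hypothesis enters only to guarantee the pivot indices of the rows in $S$ are pairwise distinct, which is precisely where it should enter. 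Two harmless points of precision: the statement, read literally, quantifies over all module elements including the zero vector (and allows zero rows, whose pivot is $0$); the intended reading, which you adopt by discarding zero rows and assuming $uM \neq 0$, is the one in Mulders--Storjohann, and with the convention $\deg 0 = -\infty$ the excluded cases are vacuous anyway. So your proposal is a complete and correct proof of a statement the paper only cites; it buys the reader self-containedness at the cost of about a page, and matches in spirit the argument in the original reference.
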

\begin{proof}
See~\cite[Lemma~8.1]{mulders-storjohann}.
\end{proof}

If $k \in C^M$, $\ell \neq k$ and $\deg(M_{\ell,I_k^M}) \geqslant \deg(P_k^M)$, there are unique $c \in \F_q$ and $e \in \N$ such that $\deg(M_{\ell,I_k^M} - c x^e P_k^M) < \deg(M_{\ell,I_k^M})$.
In that case we call the operation $M_{\ell} \gets M_{\ell} - c x^e M_{k}$ the \emph{simple transformation} of row $k$ on row $\ell$. If $I_\ell^M = I_k^M$, the transformation is called of the \emph{first kind}. 
Then an efficient algorithm to put a matrix in weak Popov form stems from the following observation:
\begin{theorem}[\cite{mulders-storjohann}]
$M \in \F_q[x]^{n \times m}$ is not in weak Popov form iff one can apply a simple transformation of the first kind on $M$, that is, not all non-zero pivot indices of $M$ are different.
\end{theorem}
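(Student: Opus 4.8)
The plan is to observe first that the clause ``not all non-zero pivot indices of $M$ are different'' is, by the very definition of weak Popov form, nothing more than a restatement of ``$M$ is not in weak Popov form'': both assert the existence of distinct $k, \ell \in C^M$ with $I_k^M = I_\ell^M$ (recall that $i \in C^M$ is exactly the condition $I_i^M \neq 0$). Hence the genuine content of the theorem is the equivalence between this condition and the applicability of a simple transformation of the first kind, which I would establish as a two-way implication.

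For the direction $(\Leftarrow)$, I would note that the mere possibility of applying a simple transformation of the first kind presupposes, by its definition, indices $k \in C^M$ and $\ell \neq k$ with $\deg(M_{\ell, I_k^M}) \geqslant \deg(P_k^M)$ and, since the transformation is of the first kind, $I_\ell^M = I_k^M$. Because $k \in C^M$ we have $I_k^M \neq 0$, so $I_\ell^M = I_k^M \neq 0$ forces $\ell \in C^M$ as well. Thus $k$ and $\ell$ are distinct carrier rows sharing a pivot index, and $M$ fails the weak Popov condition.

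For the direction $(\Rightarrow)$, suppose $M$ is not in weak Popov form, and fix distinct $k, \ell \in C^M$ with $I_k^M = I_\ell^M =: j$. Both pivot elements $P_k^M = M_{kj}$ and $P_\ell^M = M_{\ell j}$ are non-zero. Relabelling $k$ and $\ell$ if necessary, I would assume $\deg(P_k^M) \leqslant \deg(P_\ell^M)$. Then $\deg(M_{\ell, I_k^M}) = \deg(M_{\ell j}) = \deg(P_\ell^M) \geqslant \deg(P_k^M)$, which is precisely the degree inequality required to apply the simple transformation of row $k$ on row $\ell$; and since $I_\ell^M = I_k^M$, this transformation is by definition of the first kind. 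Concretely, taking $e := \deg(M_{\ell j}) - \deg(P_k^M) \geqslant 0$ and $c := \lead(M_{\ell j})/\lead(P_k^M)$ cancels the leading term, so $\deg(M_{\ell j} - c x^e P_k^M) < \deg(M_{\ell j})$, with $(c,e)$ the unique pair guaranteed by the definition of simple transformation.

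The only genuinely substantive step, and hence where I expect the main obstacle, is the orientation argument in the forward direction: the definition of a simple transformation imposes the asymmetric degree requirement $\deg(M_{\ell, I_k^M}) \geqslant \deg(P_k^M)$, so one must check that, among two rows sharing a pivot index, the transformation can always be oriented from the row of smaller pivot degree onto the row of larger pivot degree. Once that comparison is settled and the uniqueness of $(c,e)$ is invoked, both implications close immediately.
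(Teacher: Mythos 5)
Your proof is correct. Note first that the paper itself contains no argument for this statement: its ``proof'' consists solely of the citation to Lemma~2.1 of~\cite{mulders-storjohann}, so your write-up fills in a gap the paper deliberately left to the reference. Your decomposition of the claim is the right one: the clause ``not all non-zero pivot indices of $M$ are different'' is, by the very definition of weak Popov form, a verbatim restatement of ``$M$ is not in weak Popov form,'' so the only mathematical content is the equivalence with the applicability of a first-kind simple transformation. Your backward direction is immediate and sound: $I_\ell^M = I_k^M \neq 0$ forces $\ell \in C^M$, giving two distinct carrier rows with equal pivot indices. The forward direction is where the substance lies, and you handle it exactly as one must: two carrier rows sharing the pivot index $j$ have nonzero pivot elements $M_{kj}$ and $M_{\ell j}$, and after relabelling so that $\deg(M_{kj}) \leqslant \deg(M_{\ell j})$, the precondition $\deg(M_{\ell, I_k^M}) \geqslant \deg(P_k^M)$ for the simple transformation of row $k$ on row $\ell$ holds, witnessed explicitly by $e = \deg(M_{\ell j}) - \deg(M_{kj})$ and $c = \lead(M_{\ell j})/\lead(M_{kj})$, and it is of the first kind since $I_\ell^M = I_k^M$. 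The orientation issue you single out---that the transformation must be directed from the row of smaller pivot degree onto the row of larger pivot degree---is indeed the one point a careless argument would miss, and you resolve it cleanly. What your version buys over the paper's approach is a self-contained appendix at the cost of a few lines; it is the same elementary argument that underlies the cited lemma.
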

\begin{proof}
See~\cite[Lemma~2.1]{mulders-storjohann}.
\end{proof}

Therefore, all one has to do to obtain the weak Popov form of a matrix $M$ is to repeatedly check if $M$ is already in the weak Popov form (by testing if all nonzero pivot indices are different) and, if it is not, apply a simple transformation of the first kind on it.

This process is summarised in Algorithm~\ref{alg:weak-Popov} for a matrix in the form of Equation~\ref{eq:lattice-basis}, where $n = m = p$, the expected rank is $r \leqslant p$, and the degree of all rows is bounded by $d \leqslant t$. By Theorem~\ref{thm:mulders-storjohann}, its complexity is $O(p^3 t^2)$ $\F_q$ operations at most.
Here $\lead(P)$ denotes the leading coefficient of $P \in \F_q[x]$ and $\rep(I^A)$ denotes the number of occurrences of the most frequent value among the nonzero components of $I^A$, i.e. $\rep(I^A) := \max\{\#\{j \mid I_j^A = v\} \mid v \neq 0\}$.

Written in this form, Algorithm~\ref{alg:weak-Popov} is strikingly similar to the modified Euclidean algorithm usually employed in the decoding of alternant codes~\cite{sugiyama-kasahara-hirasawa-namekawa:decoding}, and actually coincides with that method for $p = 2$.

\begin{algorithm}
\caption{Computing the weak Popov form}\label{alg:weak-Popov}
\begin{algorithmic}[1]
\Input $A \in \F_q[x]^{p \times p}$ in the form of Equation~\ref{eq:lattice-basis}.
\Output weak Popov form of $A$.
\State \ZComment Compute $I^A$:
\For{$j \gets 1 \; \textbf{to} \; p$}
    \State $I_j^A \gets \; \textbf{if} \; \deg(A_{j, 1}) > 0 \; \textbf{then} \; 1 \; \textbf{else} \; j$
\EndFor
\State \ZComment Put $A$ in weak Popov form:
\While{$\rep(I^A) > 1$} \label{stat:loop-r} 
    \For{$k \gets 1 \To p \SuchThat I_k^A \neq 0$} \label{stat:loop-alpha-k}
	    \For{$\ell \gets 1 \To p \SuchThat \ell \neq k$} \label{stat:loop-alpha-l}
	        \While{$\deg(A_{\ell, I_k^A}) \geqslant \deg(A_{k, I_k^A})$} \label{stat:loop-lambda}
                \State $c \gets \lead(A_{\ell, I_k^A})/\lead(A_{k, I_k^A})$
                \State $e \gets \deg(A_{\ell, I_k^A}) - \deg(A_{k, I_k^A})$ 
                \State $A_\ell \gets A_\ell - c x^e A_k$
	        \EndWhile
            \State \ZComment Update $I_\ell^A$ and hence $\rep(I^A)$ if necessary:
            \State $d \gets \max\{\deg(A_{\ell, j}) \mid j = 1, \dots, p\}$
            \State $I_\ell^A \gets \max\{j \mid \deg(A_{\ell, j}) = d\}$
	    \EndFor
    \EndFor
\EndWhile
\State \Return $A$
\end{algorithmic}
\end{algorithm}


\section{Decoding Other Families of Codes?}\label{sec:bch}

For completeness, we briefly discuss whether and how one might attempt to use similar methods to decode a different family of alternant codes, including BCH codes and their permuted and/or punctured versions.

Let $L \in \F_q^n$ be a sequence of $n \leqslant q$ distinct nonzero elements, let $D \in \F_q^n$ be a sequence of nonzero elements, and let $H = \vdm(L) \diag(D)$. For any word $e \in \F_p^n$ we define the corresponding \emph{alternant $r$-syndrome} polynomial $s_e \in \F_q[x]$ to be $s_e(x) := \sum_{i=0}^{r-1}{s_i x^i}$ where $s^{\tp} := H e^{\tp}$, i.e.
\[
s_i = \sum_{j=0}^{n-1}{e_j D_j L_j^i}.
\]
The alternant code $\mathcal{A}(L, D, r)$ consists of the set $\{ e \in \F_p^n \mid s_e(x) \equiv 0\}$.

Using the formula for the sum of a geometric sequence $\sum_{i=0}^{r-1}{u^i} = (1 - u^r)/ (1 - u)$ whereby $\sum_{i=0}^{r-1}{L_j^i x^i} = (1 - x^r L_j^r)/ \; \; \; (1 - x L_j) \equiv 1/(1 - x L_j) \bmod x^r$, one can see that
\begin{eqnarray*}
s_e(x) &=& \sum_{i=0}^{r-1}\sum_{j=0}^{n-1}{e_j D_j L_j^i x^i} = \sum_{j=0}^{n-1}{e_j D_j \sum_{i=0}^{r-1}{L_j^i x^i}}\\
       &\equiv& \sum_{j=0}^{n-1}{\dfrac{e_j D_j}{1 - x L_j}} \mod{x^r}.
\end{eqnarray*}

The subfamily we will be interested in is that of alternant codes satisfying the restriction $\xi_j := D_j/L_j \in \F_p \setminus \{0\}$ for all $j$, so that each value $\xi_j$ can be lifted to $\Z$ with a representative in range $1 \dots p-1$.

Let $\phi \in \F_p \setminus \{0\}$ be a constant scalar. We define the generalized error locator polynomial for this family as
\begin{equation}\label{eq:pplocator}
\sigma_\phi(x) := \prod_i{(1 - x L_i)^{e_i\xi_i/\phi}}.
\end{equation}
The error positions are revealed by the \emph{inverses} of the components of $L$, which are the roots of this polynomial. This definition coincides with the usual alternant error locator polynomial when $p = 2$, in which case $D = L$ (hence, a permuted and/or punctured subcode of a binary BCH code).

Taking the derivative of the formal power series underlying $\sigma_\phi$ in Equation~\ref{eq:pplocator}, we get
\begin{eqnarray*}
\sigma_\phi'(x) &=& \sum_{j}{(e_j\xi_j/\phi)(1 - xL_j)^{e_j\xi_j/\phi - 1}(-L_j)}\\
           & & \prod_{i \neq j}{(1 - xL_i)^{e_i\xi_i/\phi}}\\
           &=& -(1/\phi)\sum_{j}{\dfrac{e_j D_j}{1 - xL_j}\prod_i{(1 - xL_i)^{e_i\xi_i/\phi}}}\\
           &=& -(1/\phi)\sigma_\phi(x)\sum_{j}{\dfrac{e_j D_j}{1 - xL_j}},
\end{eqnarray*}
which over $\F_q[x]$ reduces to
\begin{equation}\label{eq:modif-putz-putz}
-\phi \sigma_\phi'(x) = \sigma_\phi(x) s_e(x) \mod{x^r}.
\end{equation}
This is the $\phi$-th key equation for this family of codes.

Now most of the techniques developed above for Goppa codes can be applied to solve Equation~\ref{eq:modif-putz-putz}. The main difference is that the error magnitudes are computed as a function of the multiplicity $\mu_j$ of a root $1/L_j$ of $\sigma_\phi$ as $e_j = \phi \mu_j/\xi_j$.

Writing $\sigma_\phi(x) = \sum_{k=0}^{p-1}{x^k a_k(x)^p}$ for some $a_k(x)$ with $\deg(a_k) \leqslant \lfloor (r - k)/p \rfloor$, solutions to Equation~\ref{eq:modif-putz-putz} can be found as short vectors $(a_0, a_1, \dots, a_{p-1})$ in the polynomial lattice spanned by the rows of the matrix
\[
A_\phi =
\left[
\begin{array}{cccc}
x^r      & 0      & \dots  & 0\\
-v_1     & 1      & \dots  & 0\\
\vdots   & \vdots & \ddots & \vdots\\
-v_{p-1} & 0      & \dots  & 1
\end{array}
\right]
\]
where the $v_k(x)$ are polynomials satisfying $v_k(x)^p = x^k - \phi k x^{k-1} / s_e(x) \mod{x^r}$, provided that these exist.

Here the major obstacle for this technique becomes apparent: inverting $s_e(x) \bmod x^r$ is usually fine, but computing the $v_k(x)$ polynomials is only very seldom possible. Specifically, assuming that $x^k - \phi k x^{k-1} / s_e(x) \mod{x^r}$ are uniformly distributed polynomials in $\F_q[x]/x^r$ for a random code of this family, the probability that it is a $p$-th power mod $x^r$ is only about $(q^{r/p}/q^r)^{p-1} = p^{-mr(p-1)^2/p}$, corresponding to the vanishing of all but a fraction $1/p$ of the $r$ coefficients of each of the $p-1$ polynomials needed to build matrix $A$.

Therefore there is scant chance that this would work in practice, except possibly for some highly contrived code whose syndromes lead to suitable radicands with high probability. It is an open problem whether such codes exist and, if so, what they might look like.


\end{document}